\newtheorem{definition}{Definition}
\newtheorem{theorem}{Theorem}
\newtheorem{lemma}{Lemma}
\newtheorem{protocol}{Protocol}
\begin{document}

\title{On the Oblivious Transfer Capacity of the Degraded Wiretapped Binary Erasure Channel}

\author{\IEEEauthorblockN{Manoj Mishra and Bikash Kumar Dey }
\IEEEauthorblockA{IIT Bombay, India\\
Email: \{mmishra,bikash\}@ee.iitb.ac.in}
\and
\IEEEauthorblockN{Vinod M. Prabhakaran}
\IEEEauthorblockA{TIFR, Mumbai, India\\
Email: vinodmp@tifr.res.in}
\and
\IEEEauthorblockN{Suhas Diggavi}
\IEEEauthorblockA{UCLA, USA\\
Email: suhas@ee.ucla.edu}
}


\maketitle

\begin{abstract}
We study oblivious transfer (OT) between Alice and Bob in the presence of an eavesdropper Eve over a degraded wiretapped binary erasure channel from Alice to Bob and Eve. In addition to the privacy goals of oblivious transfer between Alice and Bob, we require privacy of Alice and Bob's private data from Eve. In previous work we derived the OT capacity (in the honest-but-curious model) of the wiretapped binary independent erasure channel where the erasure processes of Bob and Eve are independent. Here we derive a lower bound on the OT capacity in the same secrecy model when the wiretapped binary erasure channel is degraded in favour of Bob.
\end{abstract}

\IEEEpeerreviewmaketitle


\section{Introduction}
\label{sec:intro}

In secure multiparty computation, mutually distrusting users want to collaborate in computing functions of their data. They want to do this in such a way that no user derives additional information about other users' data than the function they compute. This has applications in several areas including data-mining, voting, auctions etc.~\cite{CramDamNiel}. In general, secure computation is not possible between users who only have access to private/common randomness and noiseless communication~\cite{kushilevitz1992}. For two-user secure computation, a noisy channel between the users (in addition to a noise-free public channel) provides a stochastic resource on which secure computation can be based~\cite{CrepKilian1988}. Oblivious transfer (OT), which is a specific two-user secure computation, has been proposed as a primitive on which all secure computation can be based~\cite{jkilian1988, jkilian2000}, and OT itself can be obtained from noisy channels.

In 1-of-2 string OT, one of the users, say, Alice, has two bit-strings of equal
length. The other user, say, Bob, wants to learn exactly one of the two
strings. Bob does not want Alice to find out which of the two strings he wants,
while Alice wants to ensure that Bob does not learn anything more than one of
the strings. The OT capacity of a discrete memoryless channel is the largest
rate of the string-length per channel use that can be achieved. We assume that
the users are honest-but-curious, that is, they follow the protocol agreed upon
but they may try to gain illegitimate information about the other user's
private data from everything they have learned at the end of the protocol. For
such users, Nascimento and Winter~\cite{NascWinter2008} obtained a lower bound
on the OT capacity of noisy channels and distributed sources.  Ahlswede and
Csisz\'{a}r~\cite{ot2007}~obtained lower bounds on the honest-but-curious OT capacity for
generalized erasure channels. These lower bounds are tight when the erasure probability is at least $\frac{1}{2}$.  Pinto et.  al.~\cite{PintoDowsMorozNasc2011}
showed that, for erasure probability at least $\frac{1}{2}$, the OT capacity of generalized erasure channels remains the same even when
the users are \emph{malicious}, that is, even if a dishonest user arbitrarily deviates
from the protocol.

Presence of third parties is a natural concern when using noisy channels.
Motivated by this, OT over wiretapped binary erasure channel was studied
in~\cite{MishraDPDisit14}.  Building on the ideas
from~\cite{NascWinter2008,ot2007}, the OT capacity of this channel was
characterized there for the honest-but-curious model. Both 2-privacy, where the
eavesdropper may collude with either Alice or Bob, and 1-privacy, where there
are no collusions, were considered. In~\cite{MishraDPDitw-invited14}, the
problem of performing independent OTs between Alice and each of the other
parties over a binary erasure broadcast channel was considered. Inner and outer
bounds on the OT capacity region were presented which meet except in one regime
of parameter ranges.

Here we study the OT capacity of the {\em degraded} wiretapped binary erasure
channel.
The problem presents some interesting new features. Oblivious transfer relies
on the noise in the legitimate channel (Alice-to-Bob channel) to hide
information of Alice and Bob from each other. In a degraded erasure channel,
the wiretapper obtains more information about the noise process on the
legitimate channel (compared to an independent erasure channel where it
receives no information on this noise process from the channel). Our achievable
scheme is in fact more involved compared to the one
in~\cite{MishraDPDisit14,MishraDPDitw-invited14} precisely because of this
fact. This is in contrast with secret key agreement using broadcast channels
with public discussion where optimal schemes are simpler for the degraded
channel in the sense that no public discussion is needed to achieve secret key
capacity when the channel is degraded~\cite{Maurer93,AhlswedeCs93}. While we do
not prove the optimality of our scheme, we believe that the additional
complexity is unavoidable.

In Section~\ref{sec:prob_defn}, we present the formal problem definition. The
main result is presented in Section~\ref{sec:result_summary}, and the proof is
presented in Section~\ref{sec:achievability}. 


\section{Problem Statement}
\label{sec:prob_defn}

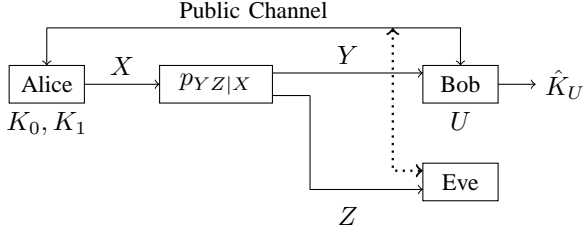
\begin{figure}[h]
\setlength{\unitlength}{1cm}
\centering
\begin{tikzpicture}[scale=1]

\draw (1,3) rectangle (2,3.5);
\draw (3,3) rectangle (4.5,3.5);
\draw (6.5,1.7) rectangle (7.5,2.2);
\draw (6.5,3) rectangle (7.5,3.5);

\draw [->] (2,3.25) -- (3,3.25 );
\draw [->] (4.5,3.4) -- (6.5, 3.4);
\draw (4.5, 3.1) -- (5,3.1);
\draw [->] (5,3.1) |- (6.5,1.85);
\draw [->] (1.5,4) -- (1.5,3.5);
\draw [->] (1.5,4) -| (7,3.5);
\draw [<->, dotted, line width=0.3mm] (6.1,4) |- (6.5, 2.1);

\node at (1.5,3.25) {\small{Alice}};
\node at (7,3.25) {\small{Bob}};
\node at (7,1.95) {\small{Eve}};
\node at (3.75, 3.25) {$p_{YZ|X}$};
\node [above] at (4.25,4) {\small{Public Channel}};
\node [below] at (1.5,3) {$K_0,K_1$};
\node [below] at (7,3) {$U$};

\node [above] at (2.5,3.25) {$X$};
\node [above] at (5.5,3.4) {$Y$};
\node [above] at (5.5,1.25) {$Z$};

\draw [->] (7.5,3.25) -- (8,3.25);
\node [right] at (8,3.25) {$\hat{K}_U$};

\end{tikzpicture}
\caption{Setup for oblivious capacity over a broadcast channel}
\label{fig:ot-setup-bcast}
\end{figure}

In the setup of Fig.~\ref{fig:ot-setup-bcast}, Alice is connected to Bob and an
eavesdropper Eve over a broadcast channel  $p_{YZ|X}$. Additionally, there is a
public channel of unlimited capacity, over which Alice and Bob can take turns
to send messages. Each message sent over this public channel is received by all
users. Alice's private data is a pair of strings $K_0,K_1$ which are $m$-bit
each, while Bob's private data is a choice bit $U$. $K_0,K_1,U$ are independent
and uniform over their respective alphabets. The goal is for Alice and Bob to
do an OT using ($K_0,K_1,U$), without Eve learning anything about
($K_0,K_1,U$).  

We consider the degraded broadcast channel setup shown in Fig.~\ref{fig:ot-setup}.
This is a special case of Fig.~\ref{fig:ot-setup-bcast}.
Here the broadcast channel $p_{YZ|X}$ is made up of a cascade of two independent binary erasure channels (BECs), BEC($\epsilon_1$) with erasure probability $\epsilon_1$ followed by a BEC($\epsilon_2$).
That is, $X\in \{0,1\}, Y,Z\in \{0,1,e\}$, and $p_{YZ|X} = p_{Y|X}p_{Z|Y}$
with $p_{Y|X}(e|1)=p_{Y|X}(e|0)=\epsilon_1$,
$p_{Y|X}(1|1)=p_{Y|X}(0|0)=1- \epsilon_1$, and
$p_{Z|Y}(e|e)=1, p_{Z|Y}(e|1)=p_{Z|Y}(e|0)=\epsilon_2$, 
$p_{Z|Y}(1|1)=p_{Z|Y}(0|0)=1-\epsilon_2$.
\begin{definition}
Let $n,m$ be positive integers. An ($n,m$)-\emph{protocol} is an exchange of messages over the setup of Figure~\ref{fig:ot-setup-bcast}. Alice transmits a bit over the broadcast channel $p_{YZ|X}$ at each time instant $t=1,2,\ldots,n$. Alice's private strings are $m$-bits each. Before each channel use by Alice and also after Alice's last channel use, Alice and Bob can take turns to send an arbitrary but finite number of messages over the public channel. 
\end{definition}
 
We denote by $\mathbf{F}$ the transcript of the public channel at the end of the protocol.
\begin{definition}
The \emph{final view} of a user is the set of random variables it generates and receives over the execution of the protocol. The final views of Alice, Bob and Eve are, respectively,
\begin{align*}
V_A & := (K_0,K_1,X^n,\mathbf{F}), \\
V_B & := (U,Y^n,\mathbf{F}), \\
V_E & := (Z^n,\mathbf{F}).
\end{align*}
\end{definition}

At the end of the protocol, Bob generates an estimate $\hat{K}_U$ of $K_U$, as a function of its final view $V_B$.
\begin{definition}
The $rate$ $r_n$ of an ($n,m$)-protocol is
\begin{equation}
r_n := \frac{m}{n}
\end{equation}
\end{definition}

\begin{definition}
A rate $R$ is \emph{achievable} in the setup of Figure~\ref{fig:ot-setup-bcast} if there exists a sequence of ($n,m$)-protocols such that, as $n \longrightarrow \infty$, $r_n \longrightarrow R$ and
\begin{align}
P[\hat{K}_U \neq K_U] & \longrightarrow 0 \label{eqn:ach_rate_1},\\
I(K_{\overline{U}} ; V_B) & \longrightarrow 0 \label{eqn:ach_rate_2} , \\
I(U ; V_A) & \longrightarrow 0 \label{eqn:ach_rate_3}, \\
I(K_0,K_1,U ; V_E) & \longrightarrow 0 \label{eqn:ach_rate_4} .
\end{align}
\end{definition}

\begin{definition}
The \emph{capacity} $C$ in the setup of Figure~\ref{fig:ot-setup-bcast} is
\begin{equation}
C := \sup\{R : R \text{ is achievable}\}
\end{equation}
\end{definition}

\begin{figure}[h]
\setlength{\unitlength}{1cm}
\centering
\begin{tikzpicture}[scale=1]

\draw (1,3) rectangle (2,3.5);
\draw (3.5,3) rectangle (5,3.5);
\draw (6.5,3) rectangle (8,3.5);
\draw (5.25,1) rectangle (6.25,1.5);
\draw (8.25,1) rectangle (9.25,1.5);

\draw [->] (2,3.25) -- (3.5,3.25);
\draw [->] (5,3.25) -- (6.5,3.25);
\draw [->] (5.5,3.25) -- (5.5,1.5);
\draw [->] (8,3.25) -| (8.5,1.5);

\draw [<-](1.5,3.5) |- (6,4);
\draw [->] (6,4) -- (6,1.5);
\draw [->, dotted, line width=0.3mm] (6,4) -| (9,1.5);
\draw [->] (6.25,1.25) -- (6.75,1.25);

\node at (1.5,3.25) {\small{Alice}};
\node at (5.75,1.25) {\small{Bob}};
\node at (8.75,1.25) {\small{Eve}};
\node at (4.25, 3.25) {\small{BEC($\epsilon_1$)}};
\node at (7.25, 3.25) {\small{BEC($\epsilon_2$)}};
\node [above] at (5.25,4) {\small{Public Channel}};
\node [below] at (1.5,3) {\small{$K_0,K_1$}};
\node [below] at (5.75,1) {$U$};
\node [right] at (6.75,1.25) {$\hat{K}_U$};

\node [above] at (2.75,3.25) {$X$};
\node [above] at (5.5,3.25) {$Y$};
\node [above] at (8.35,3.25) {$Z$};

\end{tikzpicture}
\caption{Setup for oblivious transfer over a degraded binary erasure broadcast channel}
\label{fig:ot-setup}
\end{figure}
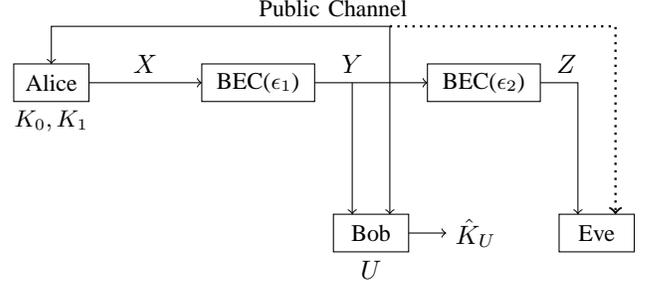


\section{Summary of Results}
\label{sec:result_summary}

Our main result is a lower bound on $C$, for the setup of Figure~\ref{fig:ot-setup}. 

\begin{theorem}
\label{thm:main_result}
\begin{equation*}
\min \left\{\frac{1}{3}\epsilon_2(1 - \epsilon_1), \epsilon_1 \right\} \leq C \leq \min\{\epsilon_2(1 - \epsilon_1), \epsilon_1 \}.
\end{equation*}
\end{theorem}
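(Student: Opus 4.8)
The lower bound is the main content of the result; the upper bound is a comparatively routine converse, so I describe the former in detail and only sketch the latter.

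\emph{Achievability plan.} The plan is to build on the erasure-channel OT schemes of Nascimento--Winter and Ahlswede--Csisz\'ar~\cite{NascWinter2008,ot2007} and on the authors' wiretapped-channel scheme~\cite{MishraDPDisit14}. Alice transmits $n$ independent uniform bits $X^n$. After transmission Bob knows his own erasure pattern, so the $n$ positions split, up to $o(n)$ and except with probability $e^{-\Omega(n)}$, into three types: $\mathcal S_A$ with $|\mathcal S_A|\approx\epsilon_1 n$ (erased at Bob, hence also at Eve by degradedness); $\mathcal S_B$ with $|\mathcal S_B|\approx(1-\epsilon_1)\epsilon_2 n$ (received at Bob, erased at Eve); and $\mathcal S_{BE}$ with $|\mathcal S_{BE}|\approx(1-\epsilon_1)(1-\epsilon_2)n$ (received at both). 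Only $\mathcal S_B$ can carry a string to Bob while hiding it from Eve, and only $\mathcal S_A$ can hide a string from Bob. Bob then publicly announces two index sets $\mathcal L_0,\mathcal L_1$, chosen from his erasure pattern and $U$; Alice publishes $M_j=K_j\oplus \mathrm{Ext}_j(X_{\mathcal L_j})$ (with $\mathrm{Ext}_j$ drawn from a publicly chosen universal hash family) together with whatever reconciliation bits Bob needs, and Bob outputs $\widehat K_U$ by hashing the part of $X_{\mathcal L_U}$ he can reconstruct.

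\emph{The main obstacle.} The feature not present in the independent-erasure case~\cite{MishraDPDisit14} is that the degraded Eve knows her erasure pattern, hence knows $\mathcal S_{BE}$, which is a $(1-\epsilon_2)$-fraction of Bob's received positions; so Eve has partial information about Bob's erasure pattern. If Bob were simply to take $\mathcal L_U$ inside his received positions (as when there is no eavesdropper) then $|\mathcal L_U\cap\mathcal S_{BE}|\approx(1-\epsilon_2)|\mathcal L_U|$, whereas $\mathcal L_{\overline U}$, which must contain $\gtrsim m$ positions of $\mathcal S_A$ to hide $K_{\overline U}$ from Bob, would have a visibly different intersection with $\mathcal S_{BE}$; Eve would then read off $U$ from the announcement, breaking~\eqref{eqn:ach_rate_4}. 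The key idea is therefore to choose $\mathcal L_0$ and $\mathcal L_1$ so that their joint type with respect to $(\mathcal S_A,\mathcal S_B,\mathcal S_{BE})$ is, to within $o(\sqrt n)$, symmetric under interchange, making Eve's view of the announcement statistically independent of $U$; since Bob cannot see $\mathcal S_{BE}$ he must achieve this by an Eve-oblivious randomized rule, which in particular ties both sets to essentially the same number of Bob-received positions. Reconciling this enforced symmetry with the requirement that Bob decode $K_U$ but not $K_{\overline U}$ is what forces the more elaborate construction --- reconciliation metered so that it unlocks $K_U$ but not $K_{\overline U}$, possibly aided by a short secret key pre-extracted from $\mathcal S_B$ --- and this construction places three comparable demands on the secret resource $\mathcal S_B$: carrying $K_U$ to Bob while hiding it from Eve, absorbing the part of $\mathcal S_B$ that the symmetrization unavoidably binds into $\mathcal L_{\overline U}$, and paying for the reconciliation/consistency overhead; balancing these gives $3m\lesssim(1-\epsilon_1)\epsilon_2 n$, while hiding $K_{\overline U}$ from Bob needs only $m\lesssim\epsilon_1 n$, whence the achievable rate $\min\{\tfrac13\epsilon_2(1-\epsilon_1),\epsilon_1\}$. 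Making this accounting tight, and proving the construction leaks nothing more, is the hard part.

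\emph{Verifying the four conditions.} Given the construction, the four requirements are checked by now-standard tools. Condition~\eqref{eqn:ach_rate_1} follows from decodability of the metered reconciliation code (erasure-decoding / typicality). Condition~\eqref{eqn:ach_rate_3}, $I(U;V_A)\to0$, holds because Alice never observes either erasure pattern, so $(\mathcal L_0,\mathcal L_1)$ is symmetric in $U$ from her side. Condition~\eqref{eqn:ach_rate_2}, $I(K_{\overline U};V_B)\to0$, follows from the leftover hash lemma: even after the metered reconciliation Bob retains $\gtrsim m$ bits of uncertainty about $X_{\mathcal L_{\overline U}}$ (the unresolved $\mathcal S_A$-positions of $\mathcal L_{\overline U}$), so $\mathrm{Ext}_{\overline U}(X_{\mathcal L_{\overline U}})$ is nearly uniform given $V_B$. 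Condition~\eqref{eqn:ach_rate_4}, $I(K_0,K_1,U;V_E)\to0$, is the heart of the secrecy analysis: $I(U;V_E)\to0$ comes from the swap-symmetry of the announcement's joint type, and $I(K_0,K_1;V_E\mid U)\to0$ from the leftover hash lemma applied jointly to the $\mathcal S_B$-portions of $\mathcal L_0$ and $\mathcal L_1$ (each of size $\gtrsim m$), which stay high-entropy given Eve's channel output and the transcript because those positions are erased at Eve; a short chain-rule argument then combines these into the joint mutual information.

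\emph{Converse.} Both bounds adapt standard OT converses. $C\le\epsilon_1$: the part of Alice's input hidden from Bob has entropy at most $H(X^n\mid Y^n)=\epsilon_1 n$, which upper bounds the information about $(K_0,K_1)$ in Bob's view; since Bob decodes $K_U$ while $K_{\overline U}$ must stay secret from him, this information is at least $m-o(n)$, so $m\le\epsilon_1 n+o(n)$. $C\le\epsilon_2(1-\epsilon_1)$: the string $K_U$ that Bob recovers, together with $I(K_0,K_1,U;V_E)\to0$, forces the protocol to distil at least $m-o(n)$ bits of Alice--Bob common randomness that is secret from Eve; single-letterizing via the Csisz\'ar sum identity bounds the available amount per channel use by $\max_{p_X}I(X;Y\mid Z)$, which for this degraded cascade of BECs equals $(1-\epsilon_1)\epsilon_2$, so $m\le(1-\epsilon_1)\epsilon_2 n+o(n)$. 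Letting $n\to\infty$ gives $C\le\min\{\epsilon_2(1-\epsilon_1),\epsilon_1\}$.
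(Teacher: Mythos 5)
You correctly isolate the new obstacle in the degraded case (Eve's unerased positions are necessarily Bob-received, so a plain announcement of $(\mathcal L_0,\mathcal L_1)$ would reveal $U$), and your converse sketch is fine and matches the paper's, which simply invokes the secret-key capacity $\epsilon_2(1-\epsilon_1)$ of the wiretapped channel and the two-party OT bound $\epsilon_1$ for BEC($\epsilon_1$). But the achievability — the actual content of the theorem — has a genuine gap: your construction is never specified at the point where it matters. The ``reconciliation metered so that it unlocks $K_U$ but not $K_{\overline U}$'' is left as a black box, the three-way accounting that produces $3m\lesssim \epsilon_2(1-\epsilon_1)n$ is asserted rather than derived from a concrete protocol, and you concede this yourself (``making this accounting tight \ldots is the hard part''). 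Worse, the symmetrization route you sketch has a concrete difficulty you do not resolve: to keep $I(U;V_A)\to 0$, Alice must treat $\mathcal L_0$ and $\mathcal L_1$ identically, so she cannot tailor reconciliation to the set Bob can decode; if she sends the same reconciliation for both, Bob (who already knows the Bob-received portion of $\mathcal L_{\overline U}$) is in danger of resolving $X|_{\mathcal L_{\overline U}}$ as well, destroying \eqref{eqn:ach_rate_2}. So as written, neither \eqref{eqn:ach_rate_1} together with \eqref{eqn:ach_rate_2}, nor the claimed rate, is actually established.

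The paper's protocol avoids this tension by a different mechanism, which is where the $\tfrac13$ really comes from. Bob keeps $L_U=G$ entirely inside his unerased positions and $L_{\overline U}=B$ entirely inside his erased positions (so no reconciliation is needed at all, and $K_{\overline U}$ is automatically hidden from Bob), publicly reveals only the complement sets $\tilde G,\tilde B$ — hence only the union $L_0\cup L_1$ — and hides the partition by one-time-padding the $2nr$-bit indicator vector $Q$ with a secret key $S_L$ extracted from a further block $\tilde G_L$ of about $2nr/\epsilon_2$ Bob-received positions; the key $S_U$ protecting $K_U$ costs roughly another $nr/\epsilon_2$ Bob-received positions (via $G\cup\tilde G_S$). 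Since Eve-secret key can be distilled from Bob-received positions at rate $\epsilon_2$ per symbol, the total demand $3nr/\epsilon_2\le (1-\epsilon_1)n$ gives $r\le\tfrac13\epsilon_2(1-\epsilon_1)$, while $|B|=nr\le \epsilon_1 n$ gives $r\le\epsilon_1$; the secrecy claims \eqref{eqn:ach_rate_2}--\eqref{eqn:ach_rate_4} are then proved by explicit chain-rule reductions to a random-coding extractor lemma (Lemma~\ref{lem:major}). Your hint about ``a short secret key pre-extracted from $\mathcal S_B$'' points in this direction, but to close the gap you would need to turn it into exactly such a construction (or another complete one) and verify the four conditions for it, rather than asserting the $3m$ budget.
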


Hence, when $\epsilon_1 \leq \frac{1}{3}\epsilon_2(1 - \epsilon_1)$, we have $C = \epsilon_1$. Note that the upper bound of $\epsilon_2(1 - \epsilon_1)$ in Theorem~\ref{thm:main_result} follows from the fact that OT capacity is upper bounded by the secret key capacity of the wiretapped channel. This is because if Bob runs the protocol with the choice bit set deterministically to, say, 0, then $K_0$ is a secret key between Alice and Bob. The upper bound follows from the fact that $\epsilon_2(1-\epsilon_1)$ is the secret key capacity of this wiretapped channel with public discussion~\cite{Maurer93,AhlswedeCs93}. The upper bound of $\epsilon_1$ follows from fact that this is an upper bound for two-party OT capacity of the binary erasure channel with erasure probability $\epsilon_1$~\cite{ot2007}.


Section~\ref{sec:achievability} gives a protocol that is used to prove the lower bound part of Theorem~\ref{thm:main_result}.


\section{Achievability of Theorem~\ref{thm:main_result}}
\label{sec:achievability}

In this section, we will describe a protocol which will be used to show that the lower bound in Theorem~\ref{thm:main_result} is an achievable rate. We begin by presenting the main ideas used in this protocol, before presenting a formal description for it.

Let $r$ be any rate smaller than $\min\{ \frac{1}{3}\epsilon_2 (1-\epsilon_1), \epsilon_1 \}$.
In the protocol, Alice begins by transmitting a sequence $X^n$ of
independent bits, each equally likely to be $0$ or $1$. Bob receives the
corresponding erased version $Y^n$, while Eve gets $Z^n$ which is an erased
version of $Y^n$ (see Figure~\ref{fig:ot-setup}). Let $E$ denote the set of
\emph{indices} at which $Y^n$ has erasures, and let $\overline{E}$
be its complement. 
Out of the set $\overline{E}$, Bob picks a \emph{good set} $G$ of size
$|G| = nr$ uniformly at random. 
Similarly, Bob also picks a \emph{bad set} $B$ of size $|B|=nr$ uniformly at
random out of set $E$. Let $\tilde{G}:=\overline{E}\setminus G$ denote
the unerased indices which are not in $G$. Note that, $|\tilde{G}|$ is 
approximately $n(1 - \epsilon_1 - r)$. Similarly, $\tilde{B}:=E\setminus B$ is the 
set of erased indices which are not in $B$, and $|\tilde{B}|$ is approximately
$n(\epsilon_1 - r)$. If $U=0$, Bob assigns sets $(L_0,L_1) = (G,B)$, 
otherwise Bob sets $(L_0,L_1) = (B,G)$. 

Bob first declares the sets $\tilde{G},\tilde{B}$ over the public channel. Thereafter, Bob forms sequential, disjoint subsets ($\tilde{G}_L, \tilde{G}_S$) of the set $\tilde{G}$, where $|\tilde{G_L}|$ is about $\frac{2nr}{\epsilon_2}$ and $|\tilde{G}_S|$ is about $\frac{nr(1 - \epsilon_2)}{\epsilon_2}$. Both $\tilde{G}_L$ and $\tilde{G}_S$ will be used to generate secret keys known to both Alice and Bob, but hidden from Eve. Note that for $r < \min\{ \frac{1}{3}\epsilon_2(1 - \epsilon_1), \epsilon_1 \}$, all these sets of the required sizes can be formed.

Alice and Bob use the set of transmissions $X^n|_{\tilde{G}_L}$ to agree on a secret key $S_L$, secret from Eve. The size of $S_L$ is $2nr$ bits. Similarly, Alice and Bob use $X^n|_{L_0 \cup \tilde{G}_S}$ to form a secret key $S_0$ and $X^n|_{L_1 \cup \tilde{G}_S}$ to form a secret key $S_1$. Here, $|S_0| = |S_1| = nr$. 

Bob now needs to reveal ($L_0,L_1$) to Alice, while hiding both these sets from Eve, so as not to reveal $U$ to Eve. Towards this goal, Bob forms a set $L$ which is an ordered version of the set $L_0 \cup L_1$. Then, Bob forms a binary vector $Q$ of length $2nr$ as follows. For all $i=0,1,\ldots,2nr-1$,  the $i$th element of $Q$, denoted by $Q_i$ will indicate whether the $i$th element of $L$, denoted by $L_i$, belongs to $L_0$ or $L_1$. That is, $Q_i = 0$ when $L_i \in L_0$, otherwise $Q_i = 1$. Bob now sends $S_L \oplus Q$ to Alice, which is sufficient for Alice to recover ($L_0,L_1$).

Alice finally sends the encrypted strings $K_0 \oplus S_0$ and $K_1 \oplus S_1$ to Bob over the public channel. Since Bob knows $S_U$ completely (since he knows $X^n|_{L_U \cup \tilde{G}_S}$), Bob can recover $K_U$.

Before presenting the protocol more formally, we point out a comparison with the independent erasure case of~\cite{MishraDPDisit14}. In the independent erasure channel, since Eve has no side information about the erasure pattern of Bob, there is no need for Bob to encrypt $(L_0,L_1)$ before sending it over the public channel. The additional burden of encrypting $L_0,L_1$ here requires Alice and Bob to generate a secret key $S_L$ which is twice as long as each string $K$. Along with $S_U$, effectively, Alice and Bob agree on secret keys which are together thrice as long as each string $K$. This explains the $\frac{1}{3}$ factor in the rate achieved. 

\begin{protocol}
\label{sch:achievable}

Let $\delta \in (0,1)$. Let $\tilde{\epsilon}_2 = \epsilon_2(1 - \delta)$. Let $r = \min\{ \frac{1}{3}\tilde{\epsilon}_2(1 - \epsilon_1) - \theta_{\delta},  \epsilon_1 - \delta\}$ be the rate to be achieved, where $\theta_{\delta} = \delta(1 + \frac{2}{\tilde{\epsilon}_2})$.
\begin{description}

\item[\textbf{Alice}] Transmits a sequence $X^n$ of independent bits, equally likely to be $0$ or $1$, over the broadcast channel in Figure~\ref{fig:ot-setup}.

\item[\textbf{Bob}] Receives the sequence $Y^n$ from the output of BEC($\epsilon_1$) and forms the erased and unerased sets of indices of $Y^n$ as, respectively,

\begin{align*}
E & := \{i \in \{1,2,\ldots,n\} : Y_i = e \} \\
\overline{E} & := \{i \in \{1,2,\ldots,n\} : Y_i \neq e \}
\end{align*}
If $|\overline{E}| < n(1 - \epsilon_1 - \delta)$ or $|E| < n(\epsilon_1 - \delta)$, Bob declares an error and quits. Otherwise, Bob proceeds and forms the following sets out of $E$ and $\overline{E}$:

\begin{align*}
G & := \text{Unif}\left\{ A \subset \overline{E} : |A| = n(r + \delta) \right\} \\
B & := \text{Unif}\left\{ A \subset E : |A| = n(r + \delta) \right\} \\
\tilde{G} & := \overline{E} \backslash G \\
\tilde{B} & := E \backslash B
\end{align*}

where $\text{Unif}\{.\}$ denotes a random, uniformly distributed, 
choice over the collection of sets.
Note that $|\tilde{G}| \geq n(1 - \epsilon_1 - r - \delta)$ and $|\tilde{B}| \geq n(\epsilon_1 - r - \delta)$.

Bob reveals the sets $\tilde{G}, \tilde{B}$ over the public channel. Bob now forms the sets $L_0,L_1$ as follows:

\begin{align*}
U = 0:\qquad &L_0 = G,\quad L_1 = B \\
U = 1:\qquad &L_0 = B,\quad L_1 = G
\end{align*}

Let $L$ be an ordered version of the set $L_0 \cup L_1$. Bob forms a binary vector $Q$ of $2nr$ bits, with elements labelled $Q_i$, $i=0,1,\ldots,2nr-1$ defined as:

\begin{equation*}
Q_i = \left\{ \begin{array}{ll} 0, & L_i \in L_0 \\ 1, & L_i \in L_1 \end{array} \right.
\end{equation*}

Bob takes the first $\frac{2n(r + \delta)}{\tilde{\epsilon}_2}$ indices in $\tilde{G}$ and calls it set $\tilde{G}_L$. Bob also takes the next $\frac{nr(1 - \tilde{\epsilon}_2)}{\tilde{\epsilon}_2}$ indices in $\tilde{G}$ and calls it set $\tilde{G}_S$. One can verify that $|\tilde{G}_L| + |\tilde{G}_S| \leq |\tilde{G}|$. Bob then forms a secret key $S_L$ using $X^n|_{\tilde{G}_L}$, where $S_L$ is known to Alice but hidden from Eve. Here, $S_L$ is $2n(r + \delta)$ bits long. Finally, Bob sends the following quantity to Alice over the public channel:

\begin{equation*}
S_L \oplus Q  
\end{equation*}  

\item[\textbf{Alice}] Uses $X^n|_{L_0 \cup \tilde{G}_S}$ to form a secret key $S_0$ and uses $X^n|_{L_1 \cup \tilde{G}_S}$ to form a secret key $S_1$. Both $S_0,S_1$ are $nr$-bit each. Alice finally sends the following two encrypted strings to Bob over the public channel:

\begin{align*}
K_0 & \oplus S_0 \\
K_1 & \oplus S_1
\end{align*}

\item[\textbf{Bob}] Knows $X^n|_{L_U \cup \tilde{G}_S}$ and hence knows $S_U$, thereby recovering $K_U$ from Alice's public message.

\end{description}

\end{protocol}

\begin{lemma}
\label{lem:achievability}
A rate of $\min \left\{ \frac{1}{3}\epsilon_2(1 - \epsilon_1), \epsilon_1 \right\}$ is achievable in the setup of Figure~\ref{fig:ot-setup}.
\end{lemma}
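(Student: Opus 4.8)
The plan is to verify that Protocol~\ref{sch:achievable} meets the four requirements \eqref{eqn:ach_rate_1}--\eqref{eqn:ach_rate_4} for a rate $r$ that, by the choice of $\theta_\delta$, tends to $\min\{\tfrac13\epsilon_2(1-\epsilon_1),\epsilon_1\}$ as $\delta\to 0$; letting $\delta\to 0$ slowly along a sequence with $n\to\infty$ then yields Lemma~\ref{lem:achievability}. I would first dispatch feasibility and concentration. By the law of large numbers (or a Chernoff bound) $|E|$ and $|\overline E|$ concentrate around $n\epsilon_1$ and $n(1-\epsilon_1)$, so Bob's abort probability vanishes, and for $r<\min\{\tfrac13\epsilon_2(1-\epsilon_1),\epsilon_1\}$ the prescribed sizes of $G,B,\tilde G_L,\tilde G_S$ fit inside $\overline E,E$ with a $\Theta(\delta n)$ margin, i.e.\ $|\tilde G_L|+|\tilde G_S|\le|\tilde G|$ and $|B|\le|E|$ hold with high probability. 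I would also record, for the secrecy arguments, that the number of coordinates of any fixed index set (in particular of $G$, $\tilde G_L$, $\tilde G_S$) that are \emph{additionally} erased by BEC($\epsilon_2$) is within $\delta n$ of $\epsilon_2$ times its cardinality with high probability. Since the abort event and all these concentration events are independent of $(K_0,K_1,U)$, conditioning on their favourable occurrence perturbs every quantity in the achievability definition by $o(1)$.

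Correctness is immediate: $L_U=G$ and $\tilde G_S$ are subsets of $\overline E$, so Bob possesses $X^n|_{L_U\cup\tilde G_S}$, hence reconstructs $S_U$ and recovers $K_U=(K_U\oplus S_U)\oplus S_U$, giving $P[\hat K_U\neq K_U]\to 0$. For Bob's privacy, note $L_{\overline U}=B\subseteq E$, so the $n(r+\delta)$ bits $X^n|_{B}$ are erased for Bob and remain uniform conditioned on the rest of $V_B$ (which he builds from $X^n|_{\overline E}$, his own coins, and $\mathbf F$); extracting the $nr$-bit key $S_{\overline U}$ from $X^n|_{B\cup\tilde G_S}$ by a publicly chosen linear universal hash function and invoking the leftover hash lemma with the $\delta n$-bit surplus makes $S_{\overline U}$ $\epsilon$-close to uniform and independent of $V_B$, so $K_{\overline U}\oplus S_{\overline U}$ acts as a one-time pad and $I(K_{\overline U};V_B)\to 0$. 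Alice's privacy follows by a symmetry argument: $\tilde G,\tilde B,G,B$ are all generated independently of $U$, and conditioned on $(X^n,\tilde G,\tilde B)$ the unordered pair $\{G,B\}$ is exchangeable (the BEC($\epsilon_1$) erasures are independent of $X^n$ and $|G|=|B|$); hence the ordered pair $(L_0,L_1)$ that Alice recovers from $S_L\oplus Q$ is equidistributed under $U=0$ and $U=1$, and since every other component of $V_A$ is a deterministic function of $(K_0,K_1,X^n,\tilde G,\tilde B,L_0,L_1)$, we get $P[U=0\mid V_A]=P[U=1\mid V_A]$ and therefore $I(U;V_A)=0$.

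The main obstacle is Eve's privacy \eqref{eqn:ach_rate_4}. Here degradedness is crucial: every coordinate erased at Bob is erased at Eve, and an $\epsilon_2$-fraction of Bob's unerased coordinates is further erased at Eve. This is exactly what the block lengths $|\tilde G_L|=2n(r+\delta)/\tilde\epsilon_2$ and $|\tilde G_S|=nr(1-\tilde\epsilon_2)/\tilde\epsilon_2$ with $\tilde\epsilon_2=\epsilon_2(1-\delta)$ are engineered to deliver: conditioned on Eve's view $(Z^n,\tilde G,\tilde B)$, the strings $X^n|_{\tilde G_L}$, $X^n|_{L_0\cup\tilde G_S}$, and $X^n|_{L_1\cup\tilde G_S}$ have min-entropy exceeding $2n(r+\delta)$, $nr$, and $nr$ respectively by $\Theta(\delta n)$ (the $\tilde G_S$ block precisely compensates for the case where $L_0$ or $L_1$ equals $G$, a $(1-\epsilon_2)$-fraction of which Eve sees). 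Privacy amplification with publicly announced universal hash functions then makes $S_L,S_0,S_1$ jointly $\epsilon$-close to uniform and independent of $(Z^n,\tilde G,\tilde B,K_0,K_1,U,Q)$, after which $S_L\oplus Q$, $K_0\oplus S_0$ and $K_1\oplus S_1$ are one-time-pad ciphertexts that leak nothing, yielding $I(K_0,K_1,U;V_E)\to 0$.

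The delicate point, where I expect to spend most of the effort, is the \emph{joint} near-uniformity of $(S_L,S_0,S_1)$ given Eve's entire view: the keys are extracted from overlapping coordinates of $X^n$ (the sub-block $\tilde G_S$ is shared by $S_0$ and $S_1$), and the transcript reveals $\tilde G,\tilde B$, so they cannot be treated as independent. I would handle this by disclosing Eve's information in a fixed order --- first $\tilde G,\tilde B$ and $Z^n$, then for each key in turn the coordinates that are fresh for that key and erased at Eve --- and applying the chain-rule (balanced-colouring) form of the secret-key-agreement argument already used in \cite{MishraDPDisit14} and \cite{ot2007,NascWinter2008}, so that each key's $\Theta(\delta n)$-bit surplus is consumed against everything revealed before it. Combining this with the one-time-pad observations above, and absorbing the $o(1)$ error/abort terms, establishes all four conditions for the rate $r$, and letting $\delta\to 0$ proves the lemma.
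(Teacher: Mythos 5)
Your proposal is correct in substance and verifies the same Protocol~\ref{sch:achievable} in the same way the paper does: Chernoff-type concentration for the abort and erasure-count events, perfect correctness from Bob knowing $X^n|_{L_U\cup\tilde G_S}$, Bob-privacy from the near-uniformity of $S_{\overline U}$ given $X^n|_{\tilde G_S}$ and Bob's view, Alice-privacy from the exchangeability of $(G,B)$ given $(\tilde G,\tilde B)$ since $|L_0|=|L_1|$ and the erasures are i.i.d., and Eve-privacy by decomposing her information and reducing it to joint secrecy of $(S_L,S_U,S_{\overline U})$ against $(Z^n|_{\tilde G_L},Z^n|_G,Z^n|_{\tilde G_S})$, with the shared block $\tilde G_S$ handled by conditioning on a superset (reveal all of $X^n|_{G\cup\tilde G_S\cup\tilde G_L}$ when treating the key drawn from $B\cup\tilde G_S$, whose fresh $n(r+\delta)$ bits in $B$ carry the surplus) --- this is exactly the chain-rule step in the paper's Lemma~\ref{lem:major} and the Markov-chain splittings in the appendix. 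The one genuine difference is the key-extraction mechanism: the paper proves existence of \emph{fixed, deterministic} maps $\mathcal{M}_n^U,\mathcal{M}_n^{\overline U}$ by choosing them uniformly at random and showing, via a Sanov-type concentration bound (Lemma~\ref{lem:minor}), that with high probability the conditional Shannon entropies $H(S_U,S_{\overline U}\mid Z^n|_G,Z^n|_{\tilde G_S})$ and $H(S_{\overline U}\mid X^n|_{\tilde G_S})$ are within $2^{-\Omega(n)}$ of maximal, simultaneously for all relevant side-information realizations; you instead use seeded linear universal hashing with the leftover hash lemma and min-entropy accounting, which requires announcing the seed on the public channel (harmless, since that channel is free and the LHL holds conditioned on the seed) and a conversion from statistical distance to mutual information (also harmless, since the distance is $2^{-\Theta(\delta n)}$ against keys of length $O(n)$). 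Both routes are sound; the paper's buys seedless deterministic maps and plugs directly into its entropy-identity chains conditioned on the no-abort event, while yours is more modular and standard. Two small points of looseness you should tighten in a full write-up: the concentration of Eve's erasures must be stated conditionally on the (random) sets $\tilde G_L,\tilde G_S,G$ (valid because the BEC($\epsilon_2$) erasures are independent of $Y^n$ and Bob's coins --- this is the event $\Upsilon$ in the paper), and in the Bob-privacy step the phrase ``uniform conditioned on the rest of $V_B$'' must exclude the ciphertext $K_{\overline U}\oplus S_{\overline U}$ itself from the conditioning before the one-time-pad argument is applied, as the paper's chain of identities does explicitly.
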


A proof of this lemma is deferred to the Appendix. Below, we give a sketch of this proof.


\begin{itemize}

\item (\ref{eqn:ach_rate_1}) is satisfied for the following reason. Since Bob knows both $X^n|_{L_U}$ and $X^n|_{\tilde{G}_S}$, Bob knows $X^n|_{L_U \cup \tilde{G}_S}$. Hence, Bob knows the secret key $S_U$ and so, Bob can recover $K_U$ correctly from $K_U \oplus S_U$ that Alice sends on the public channel.

\item (\ref{eqn:ach_rate_2}) is satisfied because Bob knows nothing about $X^n|_{L_{\overline{U}}}$. Since $S_{\overline{U}}$ is a secret key generated from $X^n|_{L_{\overline{U}} \cup \tilde{G}_S}$ and has the same number of bits as $X^n|_{L_{\overline{U}}}$, Bob will learn practically no information about $S_{\overline{U}}$ and, hence, about $K_{\overline{U}}$.

\item Alice can learn about $U$ only from Bob's public messages. In the scheme, Alice learns $L_0,L_1$ from Bob's public messages. Since $L_0,L_1$ are of the same size and since the channel acts independently on each input bit, Alice learns no information about $U$. Hence, (\ref{eqn:ach_rate_3}).

\item 
Finally, Eve cannot learn $U$ since the identity of $L_0,L_1$ remains hidden from her by the secret key $S_L$. Eve only learns $L_0 \cup L_1$ and nothing more. Conditioned on knowing $U$, Eve still does not learn $(K_0,K_1)$ since these are encrypted using secret keys $S_0,S_1$ which are secret from Eve. Hence, (\ref{eqn:ach_rate_4}) is satisfied.
\end{itemize}


\section{Acknowledgements}
The work  was supported in part by the Bharti Centre for Communication, IIT Bombay, a grant from the Department of Science and Technology, Government of India, to IIT Bombay, and by Information Technology Research Academy (ITRA), Government of India under ITRA-Mobile grant ITRA/15(64)/Mobile/USEAADWN/01.  V. Prabhakaran's research was also supported in part by a Ramanujan Fellowship from the Department of Science and Technology, Government of India. The work of S. Diggavi was supported in part by NSF grant 1321120.



\appendices

\section{Proof of Lemma~\ref{lem:achievability}}

In order to prove Lemma~\ref{lem:achievability}, we will use a sequence $\{P_n\}_{n \in \mathbb{N}}$ of Protocol~\ref{sch:achievable} and show that for $r < \min \left\{\frac{1}{3}\epsilon_2(1 - \epsilon_1), \epsilon_1 \right\}$, (\ref{eqn:ach_rate_1}) - (\ref{eqn:ach_rate_4}) hold for $\{P_n\}_{n \in \mathbb{N}}$. 

We note that for $P_n$, the transcript of the public channel is
\begin{equation}
\mathbf{F} = \text{(} \tilde{G}, \tilde{B}, S_L \oplus Q, K_0 \oplus S_0, K_1 \oplus S_1 \text{)}.
\end{equation}

Let $J$ be the indicator random variable for the event that Bob declares an error and quits. Using Chernoff bound, we see that $P[J=1] \longrightarrow 0$ as $n \longrightarrow \infty$.

\begin{enumerate}

\item \label{part:1} In order to show (\ref{eqn:ach_rate_1}) holds, given that $P[J=1] \longrightarrow 0$, it suffices to show that $P[\hat{K}_U \neq K_U | J = 0] \longrightarrow 0$.

When $J=0$, Bob knows $X^n|_{L_U}$ and Bob also knows $X^n|_{\tilde{G}_S}$. Hence, Bob knows $X^n|_{L_U \cup \tilde{G}_S}$. As a result, Bob knows the secret key $S_U$ derived out of $X^n|_{L_U \cup \tilde{G}_S}$. Hence, Bob can get $K_U$ using $K_U \oplus S_U$ sent by Alice. Thus, $P[\hat{K}_U \neq K_U | J = 0] = 0$.

\item \label{part:2} In order to show (\ref{eqn:ach_rate_2}) holds, it will suffice to show that $I(K_{\overline{U}} ; V_B | J=0) \longrightarrow 0$. All terms and assertions below are conditioned on the event $J=0$, but this is being suppressed for ease of writing.

\vspace{-0.5cm}

\begin{align*}
& I(K_{\overline{U}} ; V_B) \\
&= I(K_{\overline{U}} ; U, Y^n, \mathbf{F}) \\
&= I(K_{\overline{U}} ; U,Y^n, \tilde{G}, \tilde{B}, S_L \oplus Q, K_0 \oplus S_0, K_1 \oplus S_1) \\
&= I(K_{\overline{U}} ; U,Y^n, \tilde{G}, \tilde{B}, S_L \oplus Q, K_U \oplus S_U, K_{\overline{U}} \oplus S_{\overline{U}}) \\
&= I(K_{\overline{U}} ; U,Y^n, \tilde{G}, \tilde{B}, Q, K_U \oplus S_U, K_{\overline{U}} \oplus S_{\overline{U}}) \\
& \text{[since $S_L$ is a function of ($Y^n,\tilde{G}$)]} \\
&= I(K_{\overline{U}} ; U,Y^n, \tilde{G}, \tilde{B}, G,B, K_U \oplus S_U, K_{\overline{U}} \oplus S_{\overline{U}}) \\
& \text{[since ($G,B$) is a function of ($U,Q,\tilde{G},\tilde{B}$) and $Q$ is} \\ & \text{ a function of ($U,G,B$)]} \\
&= I(K_{\overline{U}} ; U,Y^n, \tilde{G}, \tilde{B}, G,B, K_U, K_{\overline{U}} \oplus S_{\overline{U}}) \\
& \text{[since $S_U$ is a function of ($Y^n,G,\tilde{G}$)]} \\
&= I(K_{\overline{U}} ; U,Y^n, \tilde{G}, \tilde{B}, G,B, K_{\overline{U}} \oplus S_{\overline{U}}) \\
& \text{[since $K_U$ is independent of all other variables above]} \\
&= I(K_{\overline{U}} ;  K_{\overline{U}} \oplus S_{\overline{U}} | U,Y^n,G,B,\tilde{G},\tilde{B}) \\
&= I(K_{\overline{U}} ;  K_{\overline{U}} \oplus S_{\overline{U}}    \;\; \mathlarger{\mid} \; \;     U,Y^n,Y^n|_{\tilde{G}_S} G,B,\tilde{G},\tilde{B}) \\
&= I(K_{\overline{U}} ;  K_{\overline{U}} \oplus S_{\overline{U}}  \;\; \mathlarger{\mid} \; \; Y^n|_{\tilde{G}_S}) \\
& \text{[since $S_{\overline{U}} - Y^n|_{\tilde{G}_S} - U,Y^n,G,B,\tilde{G},\tilde{B}$ is a}\\& \text{ Markov chain]} \\
&= I(K_{\overline{U}} ;  K_{\overline{U}} \oplus S_{\overline{U}} \;\; \mathlarger{\mid} \; \; Y^n|_{\tilde{G}_S}) \\
&= I(K_{\overline{U}} ;  K_{\overline{U}} \oplus S_{\overline{U}}  \;\; \mathlarger{\mid} \; \;  X^n|_{\tilde{G}_S}) \\
&= H(K_{\overline{U}} \oplus S_{\overline{U}}  \;\; \mathlarger{\mid} \; \; X^n|_{\tilde{G}_S}) - H(S_{\overline{U}} \;\; \mathlarger{\mid} \; \; K_{\overline{U}}, X^n|_{\tilde{G}_S}) \\
&= H(K_{\overline{U}} \oplus S_{\overline{U}} \;\; \mathlarger{\mid} \; \; X^n|_{\tilde{G}_S}) - H(S_{\overline{U}} \;\; \mathlarger{\mid} \; \; X^n|_{\tilde{G}_S}) \\
& \leq |S_{\overline{U}}| -  H(S_{\overline{U}} \;\; \mathlarger{\mid} \; \; X^n|_{\tilde{G}_S})
\end{align*}
As a consequence of Lemma~\ref{lem:major} of Appendix~\ref{sec:useful_lemma}, the above quantity is small.

\item \label{part:3} In order to show (\ref{eqn:ach_rate_3}) holds, it suffices to show that $I(U ; V_A | J=0) \longrightarrow 0$. All terms and assertions below are conditioned on the event $J=0$, but this is being suppressed for ease of writing.

\vspace{-0.5cm}

\begin{align*}
& I(U ; V_A ) \\
&= I(U ;K_0,K_1,X^n,\mathbf{F} ) \\
&= I(U ; K_0,K_1,X^n, \tilde{G}, \tilde{B}, S_L \oplus Q, K_0 \oplus S_0, K_1 \oplus S_1) \\
&= I(U ; K_0, K_1, X^n, \tilde{G}, \tilde{B}, S_L \oplus Q, S_0, S_1) \\
&=  I(U ; K_0,K_1, X^n, \tilde{G}, \tilde{B}, Q, S_0,S_1) \\
& \text{[since $S_L$ is a function of ($X^n,\tilde{G}$)]} \\ 
&= I(U ; K_0,K_1,X^n,\tilde{G},\tilde{B}, L_0,L_1,S_0,S_1) \\
& \text{[since ($L_0,L_1$) is a function of ($\tilde{G},\tilde{B},Q$) ]} \\
&= I(U ; K_0,K_1,X^n,\tilde{G},\tilde{B},L_0,L_1) \\
& \text{[since ($S_0,S_1$ is a function of ($X^n,L_0,L_1,\tilde{G}$))]} \\
&= I(U ; L_0,L_1) \\
& \text{[since $U - L_0,L_1 - K_0,K_1,X^n,\tilde{G},\tilde{B}$ is a Markov chain]} \\
&= 0 \\
& \text{[since the channel acts independently on each input bit} \\ &  \text{ and since $|L_0| = |L_1|$]}
\end{align*}

\item \label{part:4} In order to show (\ref{eqn:ach_rate_4}) holds, it will suffice to show that $I(K_0,K_1,U ; V_E | J=0) \longrightarrow 0$ as $n \longrightarrow \infty$. All terms and assertions below are conditioned on the event $J=0$, but this is being suppressed for ease of writing.
\begin{align*}
I(K_0&,K_1,U ; V_E) \\
&= I(K_U, K_{\overline{U}}, U ; V_E ) \\
&= I(U ; V_E) + I(K_{\overline{U}} ; V_E |U) + I(K_U ; V_E | U, K_{\overline{U}}) \\
&= I(U ; V_E ) + I(K_{\overline{U}} ; U, V_E ) + I(K_U ; U, K_{\overline{U}}, V_E)
\end{align*}
We will look at each of the above three terms separately.
\begin{align*}
& I(U ; V_E ) \\
& = I(U ; Z^n, \mathbf{F}) \\
&= I(U ; Z^n, \tilde{G}, \tilde{B}, S_L \oplus Q, K_0 \oplus S_0, K_1 \oplus S_1) \\
&\leq I(U ; Z^n, \tilde{G}, \tilde{B}, S_L \oplus Q, K_0, S_0, K_1, S_1) \\
&= I(U ; Z^n, \tilde{G}, \tilde{B}, S_L \oplus Q, S_0, S_1) \\
& \text{[ since $K_0,K_1$ are independent of all the variables above]} \\
&= I(U ; S_L \oplus Q, S_0, S_1 | Z^n, \tilde{G}, \tilde{B}) \\
&= H(S_L \oplus Q, S_0, S_1 | Z^n, \tilde{G}, \tilde{B})) \\ & \quad - H(S_L \oplus Q, S_0, S_1 | U, Z^n, \tilde{G}, \tilde{B}) \\
& \leq |S_L| + |S_0| + |S_1| -  H(S_L, S_0, S_1 | U, Q, Z^n, \tilde{G}, \tilde{B}) \\
& = |S_L| + |S_U| + |S_{\overline{U}}| -  H(S_L, S_C, S_{\overline{C}} | U, G,B, Z^n, \tilde{G}, \tilde{B}) \\
& \text{[since ($G,B$) is a function of ($U,Q,\tilde{G},\tilde{B}$)  and $Q$ is} \\ & \text{ a function of ($U,G,B$)]} \\
&=  |S_L| + |S_U| + |S_{\overline{U}}| \\ & \quad -  H(S_L, S_U, S_{\overline{U}}  \;\; \mathlarger{\mid} \; \;  Z^n|_G, Z^n|_{\tilde{G}_S}, Z^n|_{\tilde{G}_L}) \\
& \text{[since $S_L, S_U, S_{\overline{U}} - Z^n|_G, Z^n|_{\tilde{G}_S}, Z^n|_{\tilde{G}_L} -$}\\ 
  &\qquad\qquad\qquad\qquad U, G,B, Z^n, \tilde{G}, \tilde{B} \text{ is a Markov chain ]} \\
&= |S_L| + |S_U| + |S_{\overline{U}}|-  H(S_L \;\; \mathlarger{\mid} \; \; Z^n|_G, Z^n|_{\tilde{G}_S}, Z^n|_{\tilde{G}_L}) \\ & \qquad - H(S_U, S_{\overline{U}} \;\; \mathlarger{\mid} \; \; S_L, Z^n|_G, Z^n|_{\tilde{G}_S}, Z^n|_{\tilde{G}_L}) \\
&=  (|S_L|  -  H(S_L \;\; \mathlarger{\mid} \; \; Z^n|_{\tilde{G}_L})) \\
   & \qquad + (|S_U| + |S_{\overline{U}}| - H(S_U, S_{\overline{U}} \;\; \mathlarger{\mid} \; \; Z^n|_G, Z^n|_{\tilde{G}_S})) \\
& \text{[since $S_L - Z^n|_{\tilde{G}_L} -  Z^n|_G, Z^n|_{\tilde{G}_S}$ and}\\
&  S_U, S_{\overline{U}} - Z^n|_G, Z^n|_{\tilde{G}_S} - S_L, Z^n|_{\tilde{G}_L} \text{ are Markov Chains]}
\end{align*}
The first term above is small since $S_L$ is a secret key against Eve. Lemma~\ref{lem:major} of Appendix~\ref{sec:useful_lemma} implies that the second term is also small.
\begin{align*}
&  I(K_{\overline{U}} ; U, V_E) \\
&= I(K_{\overline{U}} ; U, Z^n, \mathbf{F}) \\
&= I(K_{\overline{U}} ; U,  Z^n, \tilde{G}, \tilde{B}, S_L \oplus Q, K_0 \oplus S_0, K_1 \oplus S_1) \\
&= I( K_{\overline{U}} ; U,  Z^n, \tilde{G}, \tilde{B}, S_L \oplus Q, K_U \oplus S_U, K_{\overline{U}} \oplus S_{\overline{U}} ) \\
&\leq I( K_{\overline{U}} ; U,  Z^n, \tilde{G}, \tilde{B}, S_L \oplus Q, K_U, S_U, K_{\overline{U}} \oplus S_{\overline{U}} ) \\
&= I( K_{\overline{U}} ; U,  Z^n, \tilde{G}, \tilde{B}, S_L \oplus Q, S_C, K_{\overline{U}} \oplus S_{\overline{U}} ) \\
& \text{[since $K_U$ is independent of all other variables above]} \\
& \leq I( K_{\overline{U}} ; U,  Z^n, \tilde{G}, \tilde{B}, S_L, Q, S_U, K_{\overline{U}} \oplus S_{\overline{U}} ) \\
&= I( K_{\overline{U}} ; U,  Z^n, \tilde{G}, \tilde{B}, S_L, G,B, S_U, K_{\overline{U}} \oplus S_{\overline{U}} ) \\
& \text{[since ($G,B$) is a function of ($U,Q,\tilde{G},\tilde{B}$)  and $Q$ is} \\ & \text{ a function of ($U,G,B$)]} \\
&= I( K_{\overline{U}} ; S_U,S_L, Z^n|_G, Z^n|_{\tilde{G}}, K_{\overline{U}} \oplus S_{\overline{U}} ) \\
& \text{[since $K_{\overline{U}} - S_U,S_L, Z^n|_G, Z^n|_{\tilde{G}}, K_{\overline{U}} \oplus S_{\overline{U}}$ - } \\ & \text{ $\quad U,Z^n,G,B,\tilde{G},\tilde{B}$ is a Markov chain]} \\
&= I( K_{\overline{U}} ; S_U,S_L, Z^n|_G, Z^n|_{\tilde{G}_L}, Z^n|_{\tilde{G}_S}, K_{\overline{U}} \oplus S_{\overline{U}} ) \\
&= I( K_{\overline{U}} ; S_U, Z^n|_G, Z^n|_{\tilde{G}_S}, K_{\overline{U}} \oplus S_{\overline{U}} ) \\
& \text{[ since $K_{\overline{U}} -  K_{\overline{U}} \oplus S_{\overline{U}}, S_U, Z^n|_G, Z^n|_{\tilde{G}_S} - S_L, Z^n|_{\tilde{G}_L}$} \\ & \text{ is a Markov chain ]} \\
&= I( K_{\overline{U}} ; S_U, K_{\overline{U}} \oplus S_{\overline{U}} \;\; \mathlarger{\mid} \; \; Z^n|_G, Z^n|_{\tilde{G}_S} ) \\
&= H( S_U, K_{\overline{U}} \oplus S_{\overline{U}} \;\; \mathlarger{\mid} \; \; Z^n|_G, Z^n|_{\tilde{G}_S}) \\ & \quad - H(S_U, S_{\overline{U}} \;\; \mathlarger{\mid} \; \; K_{\overline{U}}, Z^n|_G, Z^n|_{\tilde{G}_S}) \\
&=  H( S_U, K_{\overline{U}} \oplus S_{\overline{U}} \;\; \mathlarger{\mid} \; \; Z^n|_G, Z^n|_{\tilde{G}_S}) \\ & \quad - H(S_U, S_{\overline{U}} \;\; \mathlarger{\mid} \; \; Z^n|_G, Z^n|_{\tilde{G}_S}) \\
&\leq |S_U| + |S_{\overline{U}}| - H(S_U, S_{\overline{U}} \;\; \mathlarger{\mid} \; \; Z^n|_G, Z^n|_{\tilde{G}_S})
\end{align*}
The above term is small as a consequence of Lemma~\ref{lem:major} in Appendix~\ref{sec:useful_lemma}.
\begin{align*}
& I(K_U ; C, K_{\overline{U}}, V_E) \\
&= I(K_U ; U, K_{\overline{U}}, Z^n, \mathbf{F}) \\
&= I(K_U ; U, K_{\overline{C}}, Z^n, \tilde{G}, \tilde{B}, S_L \oplus Q, K_0 \oplus S_0, K_1 \oplus S_1) \\
&= I(K_U ; U, K_{\overline{U}}, Z^n, \tilde{G}, \tilde{B}, S_L \oplus Q, K_U \oplus S_U, K_{\overline{U}} \oplus S_{\overline{U}}) \\
&= I(K_U ; U, K_{\overline{U}}, S_{\overline{U}}, Z^n, \tilde{G}, \tilde{B}, S_L \oplus Q, K_U \oplus S_U) \\
&= I(K_U ; U, S_{\overline{U}}, Z^n, \tilde{G}, \tilde{B}, S_L \oplus Q, K_U \oplus S_U ) \\
& \text{[since $K_{\overline{U}}$ is independent of all other varables above]} \\
& \leq I(K_U ; U, S_{\overline{U}}, Z^n, \tilde{G}, \tilde{B}, S_L, Q, K_U \oplus S_U ) \\
&= I(K_U ; U, S_{\overline{U}}, S_L, Z^n, \tilde{G}, \tilde{B}, G,B, K_U \oplus S_U) \\
&  \text{[since ($G,B$) is a function of ($U,Q,\tilde{G},\tilde{B}$)  and $Q$ is} \\ & \text{ a function of ($U,G,B$)]} \\
&= I(K_U ; S_{\overline{U}}, S_L, Z^n|_G, Z^n|_{\tilde{G}}, K_U \oplus S_U) \\
& \text{[since $K_U - S_{\overline{U}}, S_L, Z^n|_G, Z^n|_{\tilde{G}}, K_U \oplus S_U -$} \\ & \text{$\quad  U,Z^n, G,B, \tilde{G}, \tilde{B}$ is a Markov chain]} \\
&= I(K_U ; S_{\overline{U}}, S_L, Z^n|_G, Z^n|_{\tilde{G}_L}, Z^n|_{\tilde{G}_S}, K_U \oplus S_U) \\
&= I(K_U ; S_{\overline{U}}, Z^n|_G, Z^n|_{\tilde{G}_S}, K_U \oplus S_U) \\
& \text{[ since $K_U -  K_U \oplus S_U, S_{\overline{U}}, Z^n|_G, Z^n|_{\tilde{G}_S} - S_L, Z^n|_{\tilde{G}_L}$ } \\ & \text{  is a Markov chain]} \\
&= I(K_U ; S_{\overline{U}}, K_U \oplus S_U \;\; \mathlarger{\mid} \; \; Z^n|_G, Z^n|_{\tilde{G}_S}) \\
&= H( S_{\overline{U}}, K_U \oplus S_U \;\; \mathlarger{\mid} \; \; Z^n|_G, Z^n|_{\tilde{G}_S}) \\& \qquad - H( S_{\overline{U}}, S_U \;\; \mathlarger{\mid} \; \; K_U, Z^n|_G, Z^n|_{\tilde{G}_S}) \\
&=  H( S_{\overline{U}}, K_U \oplus S_U \;\; \mathlarger{\mid} \; \; Z^n|_G, Z^n|_{\tilde{G}_S}) \\ & \qquad - H( S_{\overline{U}}, S_U \;\; \mathlarger{\mid} \; \; Z^n|_G, Z^n|_{\tilde{G}_S}) \\
& \leq |S_{\overline{U}}| + |S_U| - H( S_{\overline{U}}, S_U \;\; \mathlarger{\mid} \; \; Z^n|_G, Z^n|_{\tilde{G}_S}) 
\end{align*} 
The above term is small, again as a consequence of Lemma~\ref{lem:major}.

\end{enumerate}

\section{A Useful Lemma} \label{sec:useful_lemma}
\begin{lemma}
\label{lem:major}
Let Alice transmit a sequence $X^n$ of i.i.d. Ber($\frac{1}{2}$) bits and let Bob define the sets $G,B,\tilde{G}_S$ as in Protocol~\ref{sch:achievable}. There exists a sequence of codes $\{\mathcal{M}_n^U,\mathcal{M}_n^{\overline{U}}\}_{n \in \mathbb{N}}$, where 
$\mathcal{M}_n^U: (X^n|_{\tilde{G}_S},X^n|_G) \mapsto S_U$ and $\mathcal{M}_n^{\overline{U}}: (X^n|_{\tilde{G}_S},X^n|_B) \mapsto S_{\overline{U}}$ such that
 $|S_U| = |S_{\overline{U}}| = nr$ and, for $n \longrightarrow \infty$, 
\begin{enumerate}
\item $H(S_U,S_{\overline{U}} \;\; \mathlarger{\mid} \; \; Z^n|_G,  Z^n|_{\tilde{G}_S} ) -2nr \longrightarrow 0$,
\item $H(S_{\overline{U}} \;\; \mathlarger{\mid} \; \; X^n|_{\tilde{G}_S}) -nr \longrightarrow 0$.
\end{enumerate}
\end{lemma}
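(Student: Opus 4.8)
\textbf{Proof plan for Lemma~\ref{lem:major}.}

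The plan is to exhibit the key-generation maps $\mathcal{M}_n^U$ and $\mathcal{M}_n^{\overline{U}}$ as (randomly binned) privacy-amplification extractors applied to the unerased channel inputs sitting inside $G\cup\tilde G_S$ and $B\cup\tilde G_S$, respectively, and then control the two conditional-entropy deficits by a leftover-hash / smooth-min-entropy argument. First I would set up the source statistics conditioned on $J=0$: on the index set $\tilde{G}_S$, Alice's bits $X^n|_{\tilde G_S}$ are i.i.d. $\mathrm{Ber}(1/2)$ and are unerased at Bob (since $\tilde G_S\subseteq\overline{E}$), so Bob knows them exactly; Eve sees $Z^n|_{\tilde G_S}$, which is an independent $\mathrm{BEC}(\epsilon_2)$-corrupted copy, so each such bit is erased for Eve with probability $\epsilon_2$. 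On $G$ the same holds; on $B$ the indices are erased at Bob (Bob does not know $X^n|_B$) but the corresponding $Z$-symbols are doubly erased, so Eve learns nothing about $X^n|_B$ beyond its prior. Thus, conditioned on Eve's view $(Z^n|_G, Z^n|_{\tilde G_S})$ and on the realized erasure pattern, the pair $(X^n|_G,X^n|_{\tilde G_S})$ has min-entropy concentrated (by a Chernoff bound, up to an event of vanishing probability) around $\big(|\tilde G_S| + \epsilon_2|G|\big)\log 2 \approx n\big(\tfrac{r(1-\tilde\epsilon_2)}{\tilde\epsilon_2} + \tilde\epsilon_2\cdot\tfrac{2n?}{}\big)$ bits — more simply, $|\tilde G_S|$ is chosen as $\tfrac{nr(1-\tilde\epsilon_2)}{\tilde\epsilon_2}$ and $|G| = n(r+\delta)$, so the conditional min-entropy available to Alice and Bob jointly, after removing what Eve knows, is at least $2nr$ plus slack.

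Next I would apply the leftover hash lemma twice. For part~(1), I take $(S_U,S_{\overline U})$ to be the output of a two-universal hash family applied to $\big(X^n|_{G\cup\tilde G_S},\,X^n|_{B\cup\tilde G_S}\big)$ (one hash producing $S_C:=S_U$ from $X^n|_{G\cup\tilde G_S}$, another producing $S_{\overline C}:=S_{\overline U}$ from $X^n|_{B\cup\tilde G_S}$, with the shared block $X^n|_{\tilde G_S}$ contributing common randomness); since $G$ and $B$ are disjoint and $X^n|_B$ is completely hidden from Eve while $X^n|_G$ retains $\approx\epsilon_2|G|$ bits of min-entropy given Eve's view, the total extractable randomness comfortably exceeds $2nr$ for the chosen $\delta,\theta_\delta$. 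The leftover hash lemma then gives that $(S_U,S_{\overline U})$ is $2^{-\Omega(n)}$-close to uniform on $\{0,1\}^{2nr}$ conditioned on $(Z^n|_G,Z^n|_{\tilde G_S})$, and the $L_1$-closeness converts to $H(S_U,S_{\overline U}\mid Z^n|_G,Z^n|_{\tilde G_S}) \ge 2nr - o(1)$ by the standard continuity/Fano-type bound (e.g. the $\log|\mathcal{A}|$-scaled version of Csiszár–Körner's inequality); the reverse inequality $H(\cdot)\le 2nr$ is trivial. For part~(2) I would repeat the argument with Bob's knowledge $X^n|_{\tilde G_S}$ playing the role of the conditioning variable: $S_{\overline U}$ is extracted using $X^n|_{B\cup\tilde G_S}$, and given $X^n|_{\tilde G_S}$ the remaining source $X^n|_B$ is $|B|=n(r+\delta)$ fresh uniform bits independent of everything Bob and Eve have, so a single application of leftover hashing yields $H(S_{\overline U}\mid X^n|_{\tilde G_S}) \ge nr - o(1)$, and again the matching upper bound is immediate.

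The main obstacle I anticipate is not the hashing step itself but bookkeeping the conditional min-entropy carefully enough that the slack constants line up: I must show that, except on an event of probability $o(1)$, the smooth min-entropy of $(X^n|_G,X^n|_{\tilde G_S})$ given $(Z^n|_G,Z^n|_{\tilde G_S})$ is at least $2nr + \Omega(n)$, which forces the precise choices $|\tilde G_L| = \tfrac{2n(r+\delta)}{\tilde\epsilon_2}$, $|\tilde G_S| = \tfrac{nr(1-\tilde\epsilon_2)}{\tilde\epsilon_2}$, and the definition of $\theta_\delta = \delta(1+\tfrac{2}{\tilde\epsilon_2})$ in Protocol~\ref{sch:achievable} — these are exactly tuned so that $r < \tfrac13\tilde\epsilon_2(1-\epsilon_1)-\theta_\delta$ guarantees enough unerased symbols survive at Eve with the right margin. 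A secondary subtlety is that $G$ and $B$ are chosen \emph{as subsets of the realized erasure pattern}, so $X^n|_G$ and $X^n|_B$ are not unconditionally i.i.d.; conditioning on $(E,\overline E)$ and on $J=0$ restores independence and uniformity, and one checks that Eve's extra information about the erasure pattern (which is what makes this channel harder than the independent-erasure case) is already fully accounted for because $Z^n|_B$ is a constant-$e$ string. Once these estimates are in place, both claims follow from one invocation each of the leftover hash lemma plus the entropy-continuity bound.
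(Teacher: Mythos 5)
Your route---privacy amplification by two-universal hashing, a smooth min-entropy count against Eve's view, and a conversion from total-variation closeness-to-uniform into a Shannon-entropy bound---is genuinely different from the paper's. The paper instead draws $\mathcal{M}_n^U,\mathcal{M}_n^{\overline U}$ uniformly at random from \emph{all} maps and proves a concentration statement (Lemma~\ref{lem:minor}, via Sanov's theorem and a union bound over all small conditioning sets) saying that a random map's output retains nearly full Shannon entropy conditioned on any fixing of a small subset of its input coordinates; Lemma~\ref{lem:major} is then reduced to this through the chain rule $H(S_U,S_{\overline U}\mid Z^n|_G,Z^n|_{\tilde G_S})\geq H(S_U\mid Z^n|_G,Z^n|_{\tilde G_S})+H(S_{\overline U}\mid X^n|_G,X^n|_{\tilde G_S})$ together with a typicality event on Eve's erasure fraction. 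Your parameter bookkeeping (sizes of $G$, $B$, $\tilde G_S$, the role of $\tilde\epsilon_2$ and $\theta_\delta$) is consistent with Protocol~\ref{sch:achievable}, and your argument for part~(2) is sound modulo derandomization.

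There is, however, a gap in part~(1) as you state it: you apply the leftover hash lemma once to conclude that the \emph{pair} $(S_U,S_{\overline U})$ is jointly $2^{-\Omega(n)}$-close to uniform on $\{0,1\}^{2nr}$ given $(Z^n|_G,Z^n|_{\tilde G_S})$. But $S_U$ and $S_{\overline U}$ are outputs of two different hashes acting on \emph{overlapping} inputs $(X^n|_{\tilde G_S},X^n|_G)$ and $(X^n|_{\tilde G_S},X^n|_B)$, and the combined map $(a,c,b)\mapsto(h_1(a,c),h_2(a,b))$ is not two-universal on the joint source: two inputs that differ only in the $B$-block collide with probability $2^{-nr}$, not $2^{-2nr}$, so a one-shot invocation of the leftover hash lemma on the concatenated output is not justified. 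The standard repair is precisely the decomposition the paper uses: bound $H(S_U\mid Z^n|_G,Z^n|_{\tilde G_S})$ by the leftover hash lemma applied to $h_1$ alone (on the typical event the conditional min-entropy is at least about $\tilde\epsilon_2(|G|+|\tilde G_S|)\geq nr+n\delta\tilde\epsilon_2$), and bound $H(S_{\overline U}\mid S_U,Z^n|_G,Z^n|_{\tilde G_S})$ from below by conditioning on all of $(X^n|_G,X^n|_{\tilde G_S})$, after which $X^n|_B$ still carries full min-entropy $n(r+\delta)>nr$ and a second, separate application of the lemma closes the argument. Two smaller points: the atypical-erasure event must be accounted for explicitly (its probability times $2nr$ must vanish, which it does since the Chernoff bound is exponential), and since Lemma~\ref{lem:major} asserts the existence of deterministic codes you must fix a good hash seed at the end (e.g., by Markov's inequality and a union bound over the two entropy conditions), paralleling the paper's with-high-probability random-coding conclusion.
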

\begin{proof}
We prove this lemma using a random coding argument. The codes (maps) $\mathcal{M}_n^U$ and $\mathcal{M}_n^{\overline{U}}$ are chosen independently and uniformly at random from among all possible maps. We treat the entropies $H(S_U,S_{\overline{U}} \; \mathlarger{\mid}  \; Z^n|_G,  Z^n|_{\tilde{G}_S} )$ and $H(S_{\overline{U}} \; \mathlarger{\mid}  \; X^n|_{\tilde{G}_S})$ as random variables (which depend on the random code). It then suffices to show that with high probability they approach $2nr$ and $nr$, respectively.

We will make use of Lemma~\ref{lem:minor}, which is stated after this proof. Consider,
\begin{align*}
&H(S_U,S_{\overline{U}} \; \mathlarger{\mid}  \; Z^n|_G,  Z^n|_{\tilde{G}_S} )\\
 &= H(S_U \; \mathlarger{\mid} \;  Z^n|_G,  Z^n|_{\tilde{G}_S}) + H(S_{\overline{U}} \; \mathlarger{\mid}  \; S_U, Z^n|_G,  Z^n|_{\tilde{G}_S})\\
 &\geq H(S_U \; \mathlarger{\mid}  \; Z^n|_G,  Z^n|_{\tilde{G}_S}) + H(S_{\overline{U}} \; \mathlarger{\mid}  \; X^n|_G,  X^n|_{\tilde{G}_S}).
\end{align*}
For the second term, we may directly invoke Lemma~\ref{lem:minor} using the fact that $|B|>|S_{\overline{U}}|=nr$ to conclude that with high probability the second term approaches $nr$. For the first term, let $\Upsilon$ be the typical event that the fraction of erasures in $(Z^n|_G,Z^n|_{\tilde{G}_S})$ is at least a $\epsilon_2(1-\frac{\delta}{2})$. 
\begin{align*}
 H(S_U \; \mathlarger{\mid}  \; Z^n|_G,  Z^n|_{\tilde{G}_S}) \geq  H(S_U \; \mathlarger{\mid}  \; Z^n|_G,  Z^n|_{\tilde{G}_S},\Upsilon)P(\Upsilon).
\end{align*}
Using the fact that $P(\Upsilon)\rightarrow 1$ and invoking Lemma~\ref{lem:minor} we may conclude that the first term also approaches $nr$ with high probability.

We may directly invoke Lemma~\ref{lem:minor} as we did above for the second term to conclude that $H(S_{\overline{U}} | X^n|_{\tilde{G}_S})\rightarrow nr$ with high probability. Hence, by union bound we can conclude that with high probability $H(S_U,S_{\overline{U}} \; \mathlarger{\mid}  \; Z^n|_G,  Z^n|_{\tilde{G}_S} )$ and $H(S_{\overline{U}} \; \mathlarger{\mid}  \; X^n|_{\tilde{G}_S})$ approach $2nr$ and $nr$, respectively.
\end{proof}

\begin{lemma} \label{lem:minor}
Let $\alpha, \beta, \delta > 0$ be such that 
$\alpha +\beta + \delta <1$. Let
$X$ be a vector chosen uniformly at random from $\{0,1\}^n$. Let
$f:\{0,1\}^n \rightarrow \{0,1\}^{n\alpha}$ be a map chosen uniformly at random
from all possible maps. Then with high probability over the choice of $f$,
the map satisfies the property that for every $I\subset \{1,2,\cdots,n\}$
with $|I| \leq n\beta$, and for every $y\in \{0,1\}^{|I|}$, 
$H(f(X)|X|_I=y) \geq n\alpha - 2^{-\delta n}$.
\end{lemma}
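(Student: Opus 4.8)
The plan is to fix an arbitrary subset $I$ and an arbitrary boundary condition $y \in \{0,1\}^{|I|}$, bound the probability (over the random choice of $f$) that the conditional distribution of $f(X)$ given $X|_I = y$ is far from uniform, and then take a union bound over all choices of $(I,y)$. The number of such pairs is at most $\sum_{k \le n\beta}\binom{n}{k}2^k \le 2^{n(H(\beta)+\beta)+o(n)}$, which is exponentially small compared to $2^n$; so as long as the per-pair failure probability decays like $2^{-cn}$ with $c$ strictly larger than $H(\beta)+\beta$, the union bound closes. (One will need $\delta$ small relative to the gap $1 - \alpha - \beta$; I would simply shrink $\delta$ at the end if needed, since the statement only asserts existence of good $f$ for the parameters as used in Lemma~\ref{lem:major}, where there is slack.)

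For a fixed $(I,y)$, condition on $X|_I = y$. Then $X$ is uniform over the $2^{n-|I|} \ge 2^{n(1-\beta)}$ completions, and the values $\{f(x) : x|_I = y\}$ are i.i.d.\ uniform on $\{0,1\}^{n\alpha}$ by the choice of $f$. So the conditional distribution of $f(X)$ given $X|_I=y$ is the empirical distribution of $N := 2^{n-|I|}$ i.i.d.\ uniform samples on a set of size $M := 2^{n\alpha}$. Since $N/M \ge 2^{n(1-\alpha-\beta)}$ is exponentially large, each of the $M$ bins receives a count concentrated sharply around $N/M$. The key step is: by a Chernoff bound, for each symbol $s$, $\Pr[\,|\,\text{count}(s) - N/M\,| > \tfrac{N}{M}2^{-\delta' n}\,] \le 2\exp(-c\, 2^{(1-\alpha-\beta-2\delta')n})$ for suitable $\delta', c>0$; a union bound over the $M = 2^{n\alpha}$ symbols still leaves a doubly-exponentially small failure probability, which dominates the $2^{O(n)}$ union bound over $(I,y)$ trivially. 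On the good event, every symbol probability lies in $2^{-n\alpha}(1 \pm 2^{-\delta' n})$, so the conditional distribution is within total variation (indeed $\ell_\infty$) $2^{-\delta' n}$ of uniform. Finally convert closeness-to-uniform into an entropy bound: by continuity of entropy (e.g.\ the standard bound $H(\text{unif}) - H(P) \le$ a function of $\|P - \text{unif}\|_1$ that is $o(1)$ here, or simply $H(P) \ge -\sum_s p_s \log p_s \ge n\alpha - \log(1 + 2^{-\delta' n}) \ge n\alpha - 2^{-\delta n}$ directly from the per-symbol bounds after adjusting constants), we get $H(f(X)\mid X|_I = y) \ge n\alpha - 2^{-\delta n}$.

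The main obstacle is bookkeeping the exponents so that the three layers — the $2^{O(n)}$ union bound over $(I,y)$, the $2^{n\alpha}$ union bound over output symbols, and the Chernoff concentration of each bin count — fit together, i.e.\ ensuring the concentration scale $2^{-\delta' n}$ and the resulting entropy gap $2^{-\delta n}$ can be chosen consistently with the hypothesis $\alpha+\beta+\delta<1$ (which is exactly what makes $N/M$ exponentially large and the Chernoff bound doubly-exponentially strong). Everything else is routine: the Chernoff estimate for binomial counts, the counting of $(I,y)$ pairs, and the entropy-vs-uniformity inequality are all standard.
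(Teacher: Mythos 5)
Your proposal is correct in substance but reaches the conclusion by a different concentration argument than the paper. The paper also fixes $(I,y)$, observes that the $N=2^{n(1-\beta)}$ images $f(\cdot\,,X|_I=y)$ are i.i.d.\ uniform on $\{0,1\}^{n\alpha}$, and then applies Sanov's theorem \emph{directly to the entropy event} $\{H(\hat{p}_J)<n\alpha-2^{-n\delta}\}$: the probability is at most $(N+1)^{2^{n\alpha}}2^{-ND(p^*\|u)}$ with $D(p^*\|u)>2^{-n\delta}$, which is doubly exponentially small precisely when $\alpha+\beta+\delta<1$, and the singly exponential union bound over $(I,y)$ is then absorbed. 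You instead establish $\ell_\infty$-closeness of the empirical distribution to uniform via a per-bin Chernoff bound plus a union over the $2^{n\alpha}$ bins, and then convert closeness into an entropy lower bound; the two routes exploit exactly the same structural facts (conditional i.i.d.\ uniformity of the images and $N/M=2^{n(1-\beta-\alpha)}$ being exponentially large), and your layering of the three union bounds is sound since the innermost failure probability is doubly exponential. One quantitative caveat, which you yourself flag: with the crude continuity estimate $H(P)\ge n\alpha-\log(1+2^{-\delta' n})$ you need $\delta'>\delta$ together with $2\delta'<1-\alpha-\beta$, so as written you only recover the stated bound when $\alpha+\beta+2\delta<1$, and otherwise you prove the lemma with a smaller exponent in place of $\delta$. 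This weaker form is entirely sufficient for the only use of Lemma~\ref{lem:minor}, namely Lemma~\ref{lem:major}, where any exponentially (indeed any vanishing) small entropy deficit suffices; and if you want the lemma verbatim for all $\alpha+\beta+\delta<1$, note that since the relative deviations $\eta_s$ satisfy $\sum_s\eta_s=0$, the deficit is actually $n\alpha-H(P)=D(P\|u)=O(\epsilon^2)$ rather than $O(\epsilon)$, so choosing $\delta'\in(\delta/2,(1-\alpha-\beta)/2)$ closes the gap without shrinking $\delta$. The paper's Sanov route avoids this bookkeeping because it bounds the entropy deviation event in one shot, at the cost of invoking a large-deviations result over an exponentially large alphabet; your route is more elementary and makes the near-uniformity of $f(X)$ given $X|_I=y$ explicit, which is slightly stronger information than an entropy bound alone.
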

\begin{proof}
Let us first fix a particular subset $I$ and a realization
$X|_I=y$. Without loss of generality, we assume $|I|=n\beta$.
Let us denote $J:=I^C$ for simplicity. 
Now, $f:X|_{J} \mapsto \{0,1\}^{n\alpha}$ is only a function
of the components of $X$ in $J$. Let $Y_1,Y_2,\cdots, Y_N$ denote the
images $f(X|_{J},X|_I=y)$ of all $X|_{J}\in \{0,1\}^{|J|}$, where $N=2^{|J|}
=2^{n(1-\beta)}$.
Clearly these are independent and uniformly distributed over 
the $2^{n\alpha}$ binary strings in $\{0,1\}^{n\alpha}$.
The empirical distribution of $Y_i; \, i=1,2,\cdots, N$ is denoted as 
$\hat{p}_J$. Rest of the proof is exactly the same as that
of \cite[Lemma~10]{MishraDPDisit14}. We repeat it here for completeness. 

By Sanov's theorem,
\begin{align*}
Pr \left[H(\hat{p}_J) < n\alpha - 2^{-n\delta}\right] & \leq (N+1)^{2^{n\alpha}} 2^{-ND(p^*||u)}
\end{align*}
where $u$ dentoes the uniform distribution over $\{0,1\}^{n\alpha}$, and
\begin{align*}
p^* = \arg\min_{p:H(p) < n\alpha - 2^{-n\delta}} D(p||u). &
\end{align*}
Clearly,
\begin{align*}
D(p^*||u) = n\alpha - H(p^*) > 2^{-n\delta}. &
\end{align*}
So
\begin{align*}
Pr \left[H(\hat{p}_J)< n\alpha - 2^{-n\delta}\right] & < (2^{|J|} +1)^{2^{n\alpha}} 2^{-2^{|J|}\cdot 2^{-n\delta}} \\
& < 2^{n(1-\beta +1/n)\cdot 2^{n\alpha}} \cdot 2^{-2^{n(1-\beta)}\cdot 2^{-n\delta}} \\
& \leq 2^{-2^{n\alpha}(2^{n(1-\beta -\alpha -\delta)}-n(1-\beta +1/n))}.
\end{align*}
Since $\beta+\alpha+\delta <1$, by union bound, we have
\begin{align*}
& Pr \left[H(\hat{p}_J)< n\alpha - 2^{-n\delta} \text{ for some } I
\text{ and some } y\in \{0,1\}^{|I|}\right]   \\
& \hspace*{10mm} \leq 2^{-2^{n\alpha}}
\end{align*}
This goes to zero doubly exponentially fast as $n\rightarrow \infty$.
\end{proof}

\end{document}